\newtheorem{theorem}{Theorem}
\def\coqin#1{\mintinline{ssr}{#1}}
\def\newterm#1{{\sl #1}}
\def\coq{\textsc{Coq}}
\def\monae{\textsc{Monae}}
\def\ssreflect{\textsc{SSReflect}}
\def\mathcomp{\textsc{MathComp}}
\def\infotheo{\textsc{InfoTheo}}
\def\paramcoq{\textsc{ParamCoq}}
\def\join{\coqin{Join}}
\def\ret{\coqin{Ret}}
\def\us{\char`\_}
\renewcommand{\paragraph}[2]{\noindent\textbf{#2}~}
\definecolor{setcolor}{HTML}{228B22}
\def\Set{{\color{setcolor}{\tt Set}}}
\def\impredicativeset{{\tt -impredicative-set}}
\title{Extending Equational Monadic Reasoning with Monad Transformers}
\author[1]{Reynald  Affeldt}
\affil[1]{National Institute of Advanced Industrial Science and Technology, Tokyo, Japan}
\author[2]{David Nowak}
\affil[2]{Univ. Lille, CNRS, Centrale Lille, UMR 9189 CRIStAL, F-59000 Lille, France}
\begin{document}

\maketitle

\begin{abstract}
  There is a recent interest for the verification of monadic programs
  using proof assistants.
  This line of research raises the question of the integration of
  monad transformers, a standard technique to combine monads.
  In this paper, we extend Monae, a Coq library for monadic equational
  reasoning, with monad transformers and we explain the benefits of
  this extension.
  Our starting point is the existing theory of modular monad
  transformers, which provides a uniform treatment of operations.
  Using this theory, we simplify the formalization of models in Monae
  and we propose an approach to support monadic equational reasoning
  in the presence of monad transformers.
  We also use Monae to revisit the lifting theorems of modular monad
  transformers by providing equational proofs and explaining how to
  patch a known bug using a non-standard use of Coq that combines
  impredicative polymorphism and parametricity.
\\
keywords: monads, monad transformers, Coq, impredicativity, parametricity
\end{abstract}

\section{Introduction}
\label{sec:intro}

There is a recent interest for the formal verification of monadic
programs stemming from \newterm{monadic equational reasoning\/}: an
approach to the verification of monadic programs that emphasizes
equational
reasoning~\cite{gibbons2011icfp,gibbons2012utp,mu2019tr2,mu2019tr3,mu2020flops}.
In this approach, an effect is represented by an operator belonging to
an interface together with equational laws. The interfaces all inherit
from the type class of monads and the interfaces are organized in a
hierarchy where they are extended and composed.
There are several efforts to bring monadic equational reasoning to
proof
assistants~\cite{affeldt2019mpc,pauwels2019mpc,affeldt2020trustful}.

In monadic equational reasoning, the user cannot rely on the
\newterm{model\/} of the interfaces because the implementation of the
corresponding monads is kept hidden. The construction of models is
nevertheless important to avoid mistakes when adding equational
laws~\cite{affeldt2020trustful}. This means that a formalization of
monadic equational reasoning needs to provide tools to formalize
models.

In this paper, we extend an existing formalization of monadic
equational reasoning (called \monae{}~\cite{affeldt2019mpc}) with
\newterm{monad transformers}.
Monad transformers is a well-known approach to combine monads that is
both modular and practical~\cite{liang1995popl}. It is also commonly
used to write Haskell programs.
The interest in extending monadic equational reasoning with monad
transformers is therefore twofold: (1)~it enriches the toolbox to
build formal models of monad interfaces, and (2)~it makes programs
written with monad transformers amenable to equational reasoning.
 
In fact, the interest for a formal theory of monad transformers goes
beyond its application to monadic equational reasoning.
Past research advances about monad transformers could have benefited
from formalization.
For example, a decade ago, Jaskelioff identified a lack of uniformity
in the definitions of the liftings of operations through monad
transformers~\cite{jaskelioff2009esop}.
He proposed \newterm{modular monad transformers} which come with a
uniform definition of lifting for operations that qualify as
\newterm{sigma-operations} or their sub-class of \newterm{algebraic
  operations}.
Unfortunately, the original proposal in terms of System F$\omega$
was soon ruled out as faulty~\cite[Sect.~6]{jaskelioff2010}
\cite[p.~7]{jaskelioff2009phd} and its fix gave rise to a more
involved presentation in terms of (non-trivial) category
theory~\cite{jaskelioff2010}.
More recently, this is the comparison between monad transformers and
algebraic effects that attracts attention, and it connects back to
reasoning using equational laws (e.g.,
\cite[Sect.~7]{schrijvers2019haskell}).
This is why in this paper, not only do we provide examples of monad
transformers and applications of monadic equational reasoning, but
also formalize a theory of monad transformers.

\paragraph*{Contributions}
In this paper, we propose a formalization in the \coq{} proof
assistant~\cite{coq} of monad transformers.
This formalization comes as an extension of \monae{}, an existing
library that provides a hierarchy of monad
interfaces~\cite{affeldt2019mpc}.
The benefits of this extension are as follows.
\begin{itemize}
\item The addition of sigma-operations and of monad transformers to
  \monae{} improves the implementation of models of monads.
  These models are often well-known and it is tempting to define them
  in an ad hoc way. Sigma-operations help us discipline proof scripts
  and naming, which are important aspects of proof-engineering.
\item We illustrate with an example how to extend \monae{} to verify a
  program written with a monad transformer.
  Verification is performed by equational reasoning using equational
  laws from a monad interface whose model is built using a monad
  transformer.
\item We use our formalization of monad transformers to formalize the
  theory of lifting of modular monad transformers.
  Thanks to \monae{}, the main theorems of modular monad transformers
  can be given short formal proofs in terms of equational reasoning.
\end{itemize}

Regarding the theory of lifting of modular monad transformers, our
theory fixes the original presentation~\cite{jaskelioff2009esop}.
This fix consists in a non-standard use of \coq{} combining
impredicativity and parametricity (as implemented by \paramcoq{}~\cite{keller2012csl})
that allows for an encoding using
the language of the proof assistant and thus avoids the hassle of
going through a technical formalization of category theory
(which is how Jaskelioff fixed his original proposal).
It must be said that this was not possible at the time of the original
paper on modular monad transformers because parametric models of
dependent type theory were not known~\cite{atkey2014popl} (but were
``expected''~\cite{jaskelioffPC}).
We are therefore in the situation where formalization using a proof
assistant allowed for a fruitful revisit of pencil-and-paper proofs.

Regarding the benefit of extending \monae{} with sigma-operations and
monad transformers, we would like to stress that this is also a step
towards more modularity in our formalization of monadic equational
reasoning.
Indeed, one important issue that we have been facing is the quality of
our proof scripts.
Proof scripts that reproduce monadic equational reasoning must be as
concise as they are on paper.
Proof scripts that build models (and prove lemmas) should be
maintainable (to be improved or fixed easily in case of changes in the
hypotheses) and understandable (this means having a good balance
between the length of the proof script and its readability).
This manifests as mundane but important tasks such as factorization of
proof scripts, generalization of lemmas, abstraction of data
structures, etc.
From the viewpoint of proof-engineering, striving for modularity is
always a good investment because it helps in breaking the
formalization task into well-identified, loosely-coupled pieces.

\paragraph*{Outline}
In Sect.~\ref{sec:background}, we recall the main constructs of
\monae{}.
In Sect.~\ref{sec:extension}, we formalize the basics of modular monad
transformers: sigma-operations, monad transformers, and their
variants (algebraic operations and functorial monad transformers).
Section~\ref{sec:application} is our first application: we show with
an example how to extend \monae{} to verify a program written using
monad transformers.
In Sect.~\ref{sec:theorem19}, we use our formalization of monad
transformers to prove a first theorem about modular monad transformers
(namely, the lifting of algebraic operations) using equational
reasoning.
In Sect.~\ref{sec:lifting_operations}, we formalize (and fix) the main
theorem of modular monad transformers (namely, the lifting of
sigma-operations that are not necessarily algebraic along 
functorial monad transformers).
We review related work in Sect.~\ref{sec:related_work} and conclude in
Sect.~\ref{sec:conclusion}.

\section{Overview of the \monae{} Library}
\label{sec:background}

\monae{}~\cite{affeldt2019mpc} is a formal library implemented in the
\coq{} proof assistant~\cite{coq} to support monadic equational
reasoning~\cite{gibbons2011icfp}.
It takes advantage of the rewriting capabilities of the tactic
language called \ssreflect{}~\cite{ssrman} to achieve formal proofs by
rewriting that are very close to their pencil-and-paper counterparts.
\monae{} provides a hierarchy of monad interfaces formalized using the
methodology of \newterm{packed classes}~\cite{garillot2009tphols}.
Effects are declared as operations in interfaces together with
equational laws, and some effects extend others by (simple or
multiple) inheritance.
This modularity is important to achieve natural support for monadic
equational reasoning.

\def\Ret#1#2{{{\ret}^{\coqin{#1}}_{\coqin{#2}}}}
\def\Join#1#2{{{\join}^{\coqin{#1}}_{\coqin{#2}}}}

Let us briefly explain some types and notations provided by \monae{}
that we will use in the rest of this paper.
\monae{} provides basic category-theoretic definitions such as
functors, natural transformations, and monads. By default, they are
specialized to \coqin{UU0}, the lowest universe in the hierarchy of
\coq{} types, understood as a category\footnote{\monae{} also provides
  a more generic setting~\cite[file \coqin{category.v}]{monae} but we
  do not use it in this paper.}.
The type of functors is \coqin{functor}.
The application of a functor \coqin{F} to a function \coqin{f}
is denoted by \mintinline{ssr}{F # f}.
The composition of functors is denoted by the infix notation~\coqin{\O}.
The identity functor is denoted by \coqin{FId}.
Natural transformations from the functor \coqin{F} to the functor
\coqin{G} are denoted by \coqin{F ~> G}. Natural transformations are
formalized by their components (represented by the type \coqin{forall
  A, F A -> G A}, denoted by \coqin{F ~~> G}) together with the proof that
they are natural, i.e., the proof that they satisfy the following
predicate:
\begin{minted}{ssr}
Definition naturality (M N : functor) (m : M ~~> N) :=
  forall (A B : UU0) (h : A -> B), (N # h) \o m A = m B \o (M # h).
\end{minted}
(The infix notation \coqin{\o} is for function composition.)
Vertical composition of natural transformations is denoted by the
infix notation \coqin{\v}.  The application of a functor \coqin{F} to
a natural transformation \coqin{n} is denoted by \mintinline{ssr}{F ## n}.

The type of monads is \coqin{monad}, which inherits from the type
\coqin{functor}. Let \coqin{M} be of type \coqin{monad}. Then
\ret{} is a natural transformation \coqin{FId ~> M} and
\join{} is a natural transformation \coqin{M \O M ~> M}.
Using \ret{} and \join{}, we define the standard bind
operator with the notation~\coqin{>>=}.

In this paper, we show \coq{} proof scripts verbatim when it is
reasonable to do so.  When we write mathematical formulas, we keep the
same typewriter font, but, for clarity and to ease reading, we make
explicit some information that would otherwise by implicitly inferred
by \coq{}.
For example, one simply writes \ret{} or \join{} in proof scripts
written using \monae{} because it has been implemented in such a way
that \coq{} infers from the context which monad they refer to and
which type they apply to.  In mathematical formulas, we sometimes make
the monad explicit by writing it as a superscript of \ret{} or
\join{}, and we sometimes write the argument of a function application
as a subscript.  This leads to terms such as $\Ret{M}{A}$: the unit of
the monad \coqin{M} applied to some type~\coqin{A}.

See the online development for technical details (in particular,
\cite[file \coqin{hierarchy.v}]{monae}).

\section{Sigma-operations and Monad Transformers in \monae{}}
\label{sec:extension}

The first step is to formalize sigma-operations
(Sect.~\ref{sec:sigma-operations}) and monad transformers
(Sect.~\ref{sec:transf}).
We illustrate sigma-operations with the example of the model of the
state monad (Sect.~\ref{sec:statemodel}) and its get operation
(Sect.~\ref{sec:sigmaget}).

\subsection{Extending \monae{} with Sigma-operations}
\label{sec:sigma-operations}

Given a functor \coqin{E}, an \newterm{\coqin{E}-operation} for a monad
\coqin{M} (sigma-operation for short) is a natural transformation from
\coqin{E \O M} to \coqin{M}.
The fact that sigma-operations are defined in terms of natural
transformations is helpful to build models because it involves
structured objects (functors and natural transformations) already
instrumented with lemmas.
In other words, we consider sigma-operations as a disciplined way to
formalize effects.
For illustration, we explain how the get operation of the state monad
is formalized.

\subsubsection{Example: Model of the State Monad}
\label{sec:statemodel}

First we define a model \coqin{State.t} for the state monad (without
get and put for the time being).
We assume a type~\coqin{S} (line~\ref{line:typeS}) and define the
action on objects \coqin{acto} (line~\ref{line:actoS}), abbreviated
as~\coqin{M} (line~\ref{line:notationM}). We define the action on
morphisms \coqin{map} (line~\ref{line:mapS}) and prove the functor
laws (omitted here, see \cite[file \coqin{monad_model.v}]{monae} for
details).  This provides us with a functor \coqin{functor}
(line~\ref{line:functorS}, \coqin{Functor.Pack} and
\coqin{Functor.Mixin} are constructors from \monae{} and are named
after the packed classes methodology~\cite{garillot2009tphols}).
We define the unit of the monad by first providing its components
\coqin{ret_component} (line~\ref{line:ret0S}), and prove naturality
(line~\ref{line:retS_nat}, proof script omitted).  We then package this
proof to form a genuine natural transformation at line~\ref{line:retS}
(\coqin{Natural.Pack} and \coqin{Natural.Mixin} are constructors from
\monae{}).
We furthermore define \coqin{bind} (line~\ref{line:bindS}), prove the
properties of the unit and bind (omitted).  Finally, we call the
function \coqin{Monad_of_ret_bind} from \monae{} to build the monad
(line~\ref{line:monadS}):
\begin{minted}[fontsize=\small,numbers=left,xleftmargin=2em,escapeinside=77]{ssr}
(* in Module State *)
Variable S : UU0. 7\label{line:typeS}7
Definition acto := fun A => S -> A * S. 7\label{line:actoS}7
Local Notation M := acto. 7\label{line:notationM}7
Definition map A B (f : A -> B) (m : M A) : M B := 7\label{line:mapS}7
 fun (s : S) => let (x1, x2) := m s in (f x1, x2).
(* functor laws map_id and map_comp omitted *)
Definition functor := Functor.Pack (Functor.Mixin map_id map_comp). 7\label{line:functorS}7
Definition ret_component : FId ~~> M := fun A a => fun s => (a, s). 7\label{line:ret0S}7
Lemma naturality_ret : naturality FId functor ret_component. 7\label{line:retS_nat}7
(* proof script of naturality omitted *)
Definition ret : FId ~> functor := 7\label{line:retS}7
  Natural.Pack (Natural.Mixin naturality_ret).
Definition bind := fun A B (m : M A) (f : A -> M B) => uncurry f \o m. 7\label{line:bindS}7
(* proofs of neutrality of ret and of associativity of bind omitted *)
Definition t := Monad_of_ret_bind left_neutral right_neutral associative. 7\label{line:monadS}7
\end{minted}

\subsubsection{Example: The Get Operation as a Sigma-operation}
\label{sec:sigmaget}

By definition, for each sigma-operation we need a functor. The functor
corresponding to the get operation is defined below as
\coqin{Get.func} (line~\ref{line:getfunc}): \coqin{acto} is the action
on the objects, \coqin{actm} is the action on the morphisms (the
prefix~\coqin{@} disables implicit arguments in \coq{}):
\begin{minted}[fontsize=\small,numbers=left,xleftmargin=2em,escapeinside=77]{ssr}
(* in Module Get *)
Variable S : UU0.
Definition acto X := S -> X.
Definition actm (X Y : UU0) (f : X -> Y) (t : acto X) : acto Y := f \o t.
Program Definition func := Functor.Pack (@Functor.Mixin _ actm _ _). 7\label{line:getfunc}7
(*  proofs of the functors law omitted *)
\end{minted}
We then define the sigma-operation itself
(\coqin{StateOps.get_op} at line~\ref{line:getop}), which is a natural transformation from
\coqin{Get.func S \O M} to \coqin{M}, where \coqin{M} is the state
monad \coqin{State.t} built in Sect.~\ref{sec:statemodel}.  Note that this get operation
($\lambda s.\, k\,s\,s$, line~\ref{line:get}) is {\em not\/} the usual operation~\cite[Example~13]{jaskelioff2009esop}.
\begin{minted}[fontsize=\small,numbers=left,xleftmargin=2em,escapeinside=77]{ssr}
(* in Module StateOps *)
Variable S : UU0.
Local Notation M := (State.t S).
Definition get A (k : S -> M A) : M A := fun s => k s s. 7\label{line:get}7
Lemma naturality_get : naturality (Get.func S \O M) M get.
(* proof script of naturality omitted *)
Definition get_op : (Get.func S).-operation M := 7\label{line:getop}7
  Natural.Pack (Natural.Mixin naturality_get).
\end{minted}

\subsubsection{Example: Model of the Interface of the State Monad}

\monae{} originally comes with an interface \coqin{stateMonad} for the
state monad ({\em with} the get and put operations). It implements the
interface as presented by Gibbons and
Hinze~\cite[Sect.~6]{gibbons2011icfp}; it therefore expects the
operations to be the usual ones.  We show how to instantiate it using
the definition of sigma-operations.
First, we need to define the usual get from \coqin{StateOps.get_op}
(line~\ref{line:usual_get} below):
\begin{minted}[fontsize=\small,numbers=left,xleftmargin=2em,escapeinside=77]{ssr}
(* in Module ModelState *)
Variable S : UU0.
Local Notation M := (ModelMonad.State.t S).
Definition get : M S := StateOps.get_op _ Ret. 7\label{line:usual_get}7
\end{minted}
We do the same for the put operation (omitted).
We then build the model of interface of the state monad (with its
operations) using the appropriate constructors from \monae{}:
\begin{minted}[fontsize=\small]{ssr}
Program Definition state : stateMonad S := MonadState.Pack (MonadState.Class
  (@MonadState.Mixin _ _ get put _ _ _ _)).
(* proofs of the laws of get and put automatically discharged *)
\end{minted}

Similarly, using sigma-operations, we have formalized the operations
of the list, the output, the state, the environment, and the
continuation monads, which are the monads discussed along with modular
monad transformers~\cite[Fig.~1]{jaskelioff2009esop} (see~\cite[file
\coqin{monad_model.v}]{monae} for their formalization).

\subsection{The Sub-class of Algebraic Operations}
\label{sec:algebraic_operations}

An \coqin{E}-operation \coqin{op} for \coqin{M} is
\newterm{algebraic}~\cite[Def.~15]{jaskelioff2009esop} when
it satisfies the predicate \coqin{algebraicity} defined as follows in \coq{}
(observe the position of the continuation~\coqin{>>= f}):
\begin{minted}{ssr}
forall A B (f : A -> M B) (t : E (M A)),
  op A t >>= f = op B ((E # (fun m => m >>= f)) t).
\end{minted}
Algebraic operations are worth distinguishing because they lend
themselves more easily to lifting, and this result can be used to
define lifting for the whole class of sigma-operations (this is the
purpose of Sections~\ref{sec:theorem19}
and~\ref{sec:lifting_operations}).
We can check using \coq{} that, as expected, all the operations
discussed along with modular monad
transformers~\cite[Fig.~1]{jaskelioff2009esop} are algebraic except
for flush, local, and handle\footnote{In fact, we had to fix the
  output operation of the output monad.  Indeed, it is defined as
  follows in \cite[Example 32]{jaskelioff2009esop}:
$
{\sf output}((w,m) : W \times {\sf O}X) : {\sf O}X \,\hat{=}\, {\sf let\,}(x,w') = m {\sf \,in\,}(x, {\sf append}(w',w)).
$
We changed ${\sf append}(w',w)$ to ${\sf append}(w,w')$ to be able to prove algebraicity.}.

\subsubsection*{Example: the Get operation is Algebraic}

For example, the get operation of the state monad is algebraic:
\begin{minted}{ssr}
Lemma algebraic_get S : algebraicity (@StateOps.get_op S). Proof. by []. Qed.
\end{minted}
In the \coq{} formalization, we furthermore provide the type
\coqin{E.-aoperation M} (note the prefix ``\coqin{a}'') of an
\coqin{E.-operation M} that is actually algebraic. For example, here
is how we define the algebraic version of the get operation:
\begin{minted}[fontsize=\small]{ssr}
Definition get_aop S : (StateOps.Get.func S).-aoperation (ModelMonad.State.t S) :=
  AOperation.Pack (AOperation.Class (AOperation.Mixin (@algebraic_get S))).
\end{minted}

\subsection{Extending \monae{} with Monad Transformers}
\label{sec:transf}

Given two monads \coqin{M} and \coqin{N}, a \newterm{monad morphism}
\coqin{e} is a function of type \coqin{M ~~> N} such
that for all types \coqin{A}, \coqin{B} the following laws hold:
\begin{itemize}
\item
\begin{minted}{ssr}
e A \o Ret = Ret.                       (* MonadMLaws.ret *)
\end{minted}
\item
\begin{minted}{ssr}
forall (m : M A) (f : A -> M B),        (* MonadMLaws.bind *)
  e B (m >>= f) = e A m >>= (e B \o f).
\end{minted}
\end{itemize}

In \coq{}, we define the type of monad morphisms \coqin{monadM} that
implement the two laws above.
Monad morphisms are also natural transformations (this can be proved
easily using the laws of monad morphisms).
We therefore equip monad morphisms \coqin{e} with a canonical
structure of natural transformation. Since it is made canonical, \coq{}
is able to infer it in proof scripts but we need to make it
explicit in statements; we provide the notation \coqin{monadM_nt e}
for that purpose.

A \newterm{monad transformer} \coqin{t} is a function of type
\coqin{monad -> monad} with an operator \coqin{Lift} such that for
any monad~\coqin{M}, \coqin{Lift t M} is a monad morphism from
\coqin{M} to \coqin{t M}. Let \coqin{monadT} be the type of monad
transformers in \monae{}.
We reproduced all the examples of modular monad transformers (state,
exception, environment, output, continuation monad transformers,
resp.\ \coqin{stateT}, \coqin{exceptT}, \coqin{envT}, \coqin{outputT},
and \coqin{contT} in \cite[file \coqin{monad_transformer.v}]{monae}).

\subsubsection*{Example: The Exception Monad Transformer}

Let us assume given some type \coqin{Z : UU0} for exceptions and some
monad~\coqin{M}.  First, we define the action on objects of the monad
transformed by the exception monad transformer (the type \coqin{Z + X}
represents the sum type of the types~\coqin{Z} and~\coqin{X}):
\begin{minted}[fontsize=\small]{ssr}
Definition MX := fun X : UU0 => M (Z + X).
\end{minted}
We also define the unit and the bind operator of the transformed monad
(the constructors \coqin{inl}/\coqin{inr} inject a type into the
left/right of a sum type):
\begin{minted}[fontsize=\small]{ssr}
Definition retX X x : MX X := Ret (inr x).
Definition bindX X Y (t : MX X) (f : X -> MX Y) : MX Y :=
  t >>= fun c => match c with inl z => Ret (inl z) | inr x => f x end.
\end{minted}
Second, we define the monad morphism that will be returned by the lift
operator of the monad transformer. In \coq{}, we can formalize the
corresponding function by constructing the desired monad
assuming~\coqin{M}.  This is similar to the construction of the state
monad we saw in Sect.~\ref{sec:sigma-operations}.
We start by defining the underlying functor \coqin{MX_map}, prove the
two functor laws (let us call \coqin{MX_map_i} and \coqin{MX_map_o}
these proofs), and package them as a functor:
\begin{minted}[fontsize=\small]{ssr}
Definition MX_functor := Functor.Pack (Functor.Mixin MX_map_i MX_map_o).
\end{minted}
We then provide the natural transformation \coqin{retX_natural}
corresponding to \coqin{retX} and call the \monae{} constructor
\coqin{Monad_of_ret_bind} (like we did in
Sect.~\ref{sec:sigma-operations}):
\begin{minted}[fontsize=\small]{ssr}
Program Definition exceptTmonad : monad :=
  @Monad_of_ret_bind MX_functor retX_natural bindX _ _ _.
(* proofs of monad laws omitted *)
\end{minted}
Then we define the lift operation as a function that given a
computation \coqin{m} in the monad \coqin{M X} returns a computation
in the monad \coqin{exceptTmonad X}:
\begin{minted}[fontsize=\small]{ssr}
Definition liftX X (m : M X) : exceptTmonad X := m >>= (@RET exceptTmonad _).
\end{minted}
(The function \coqin{RET} is a variant of \coqin{Ret} better suited
for type inference here.)  We can finally package the definition of
\coqin{liftX} to form a monad morphism:
\begin{minted}{ssr}
Program Definition exceptTmonadM : monadM M exceptTmonad :=
  monadM.Pack (@monadM.Mixin _ _ liftX _ _).
(* proof of monad morphism laws omitted *)
\end{minted}

The exception monad transformer merely packages the monad morphism we
have just defined to give it the type~\coqin{monadT}:
\begin{minted}{ssr}
Definition exceptT Z := MonadT.Pack (MonadT.Mixin (exceptTmonadM Z)).
\end{minted}

One might wonder what is the relation between the monads that can be
built with these monad transformers and the monads already present in
\monae{}. For example, in Sect.~\ref{sec:sigma-operations}, we already
mentioned the \coqin{stateMonad} interface and we built a model for it
(namely, \coqin{ModelState.state}). On the other
hand, we can now, say, build a model for the identity monad (let us
call it \coqin{identity}) and build a model for that state monad as
\coqin{stateT S identity} (we have not provided the details of
\coqin{stateT}, see \cite{affeldt2019mpc}).  We can actually prove in
\coq{} that \coqin{stateT S identity} and \coqin{State.t} are {\em
  equal}\footnote{\cite[\coqin{Section
    instantiations_with_the_identity_monad}, file
  \coqin{monad_model.v}]{monae}}, so that no confusion has been
introduced by extending \monae{} with monad transformers.

\subsection{Functorial Monad Transformers}
\label{sec:fmt}

A \newterm{functorial monad transformer}~\cite[Def.~20]{jaskelioff2009esop} is a monad transformer \coqin{t} with
a function \coqin{h} (hereafter denoted by \coqin{Hmap t}) of type
\begin{minted}{ssr}
forall (M N : monad), (M ~> N) -> (t M ~> t N)
\end{minted}
such that
(1)~\coqin{h} preserves monad morphisms (the laws \coqin{MonadMLaws.ret} and \coqin{MonadMLaws.bind} seen in Sect.~\ref{sec:transf}),
(2)~\coqin{h} preserves identities and composition of natural transformations, and
(3)~\coqin{Lift t} is natural, i.e.,
\begin{minted}{ssr}
forall (M N : monad) (n : M ~> N) X, h M N n X \o Lift t M X = Lift t N X \o n X.
\end{minted}
Note that we cannot define the naturality of \coqin{Lift t} using the
predicate \coqin{naturality} we saw in Sect.~\ref{sec:background}
because it is restricted to endofunctors on~\coqin{UU0}.
Also note that Jaskelioff distinguishes monad transformers from
functorial monad transformers while Maillard defines monad transformers
as functorial by default~\cite[Def.~4.1.1]{maillard2019phd}.

\section{Application 1: Monadic Equational Reasoning in the Presence of Monad Transformers}
\label{sec:application}

We apply our formalization of monad transformers to the verification
of a recursive program combining the effects of state and
exception. We argue that this program is similar in style to what an
Haskell programmer would typically write with monad
transformers. Despite this programming style and the effects, the
correctness proof is by equational reasoning.

\subsection{Extending the Hierarchy}

\label{sec:exthier}

\begin{figure}
\includegraphics[width=14cm]{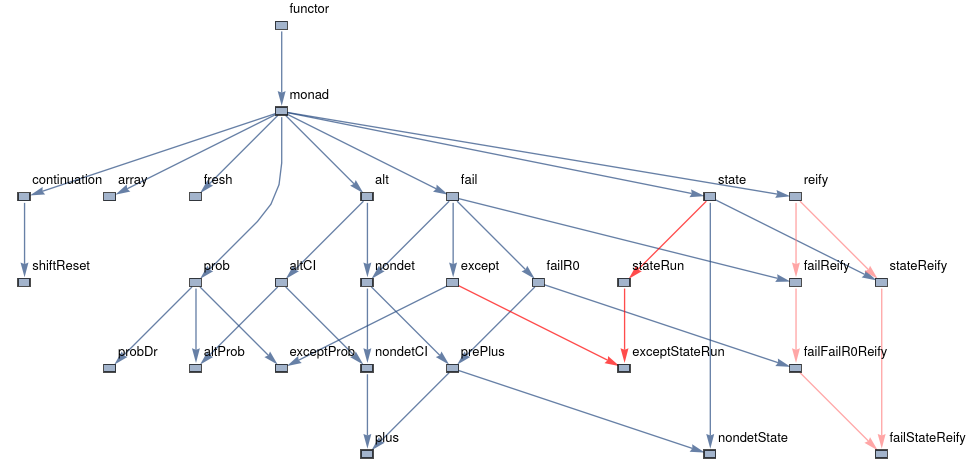}
\caption{Hierarchy of Monad Interfaces Provided by \monae{}}
\label{fig:hier}
\end{figure}

The first thing to do is to extend the hierarchy of
interfaces with \coqin{stateRunMonad} and
\coqin{exceptStateRunMonad} (Fig.~\ref{fig:hier}).

The interface \coqin{stateRunMonad} is a parameterized interface that
extends \coqin{stateMonad} with the primitive \coqin{RunStateT} and
its equations. Concretely, let \coqin{N} be a monad and \coqin{S} be
the type of states. When \coqin{m} is a computation in the monad
\coqin{stateRunMonad S N}, \coqin{RunStateT m s} runs \coqin{m} in a
state \coqin{s} and returns a computation in the monad~\coqin{N}.
There is one equation for each combination of \coqin{RunStateT} with
operations below in the hierarchy:
\begin{minted}{ssr}
RunStateT (Ret a)   s = Ret (a, s)
RunStateT (m >>= f) s = RunStateT m s >>= fun x => RunStateT (f x.1) x.2
RunStateT Get       s = Ret (s, s)
RunStateT (Put s')  s = Ret (tt, s')
\end{minted}
This is the methodology of packed classes that allows for the
overloading of the notations~\coqin{Ret} and \coqin{>>=} here. The
notation \coqin{.1} (resp.\ \coqin{.2}) is for the first (resp.\
second) projection of a pair.
The unique value of type~\coqin{unit} is~\coqin{tt}.
The operations \coqin{Get} and \coqin{Put} are the standard operations
of the state monad. Intuitively, given a monad \coqin{M} that inherits
from the state monad, \coqin{Get} is a computation of type \coqin{M S}
that returns the state and \coqin{Put} has type \coqin{S -> M unit}
and updates the state (see Sect.~\ref{sec:extension} for a model of
these operations).

The interface \coqin{exceptStateRunMonad} is the combination of
the operations and equations of \coqin{stateRunMonad} and
\coqin{exceptMonad}~\cite[Sect.~5]{gibbons2011icfp}\cite[file \coqin{hierarchy.v}]{monae} plus two
additional equations on the combination of \coqin{RunStateT} with
the operations of \coqin{exceptMonad}.
Recall that the operations of the exception monad are the computations
\coqin{Fail} of type \coqin{M A} and \coqin{Catch} of type \coqin{M A
  -> M A -> M A} for some type~\coqin{A} (which happens to be the type of
the state in this example); intuitively, \coqin{Fail} raises an exception
while \coqin{Catch} handles it.
\begin{minted}{ssr}
RunStateT Fail          s = Fail
RunStateT (Catch m1 m2) s = Catch (RunStateT m1 s) (RunStateT m2 s)
\end{minted}

Using our formalization of monad transformers presented in this paper,
it is then easy to build a model that validates those equations,
whereas in previous work we had to build a model from scratch each
time we were introducing a new combination of effects.

\subsection{Example: The Fast Product}

Now let us write a program and reason on it equationally.
First, we write a recursive function that traverses a list of natural numbers to compute their product, but fails in case a 0 is met. Intermediate results are stored in the state:
\begin{minted}{ssr}
Variables (N : exceptMonad) (M : exceptStateRunMonad nat N).
Fixpoint fastProductRec l : M unit :=
  match l with
  | [::] => Ret tt
  | 0 :: _ => Fail
  | n.+1 :: l' => Get >>= fun m => Put (m * n.+1) >> fastProductRec l'
  end.
\end{minted}
Then, the main function will catch an eventual failure.  If there is a
failure, then the result is~0, else the result is the value stored in
the state:
\begin{minted}{ssr}
Variables (N : exceptMonad) (M : exceptStateRunMonad nat N).
Definition fastProduct l : M _ :=
  Catch (Put 1 >> fastProductRec l >> Get) (Ret 0 : M _).
\end{minted}

To implement this algorithm in Haskell, we would use the state monad
transformer applied to the exception monad. It would then be necessary to
prefix each primitive of the exception monad with a
lifting operation (\texttt{lift} or \texttt{mapStateT2} in
Haskell). Here, we avoid this by using the hierarchy of interfaces,
and the use of monad transformers is restricted to the construction of
models for the interfaces.

The correctness states that the result of the fast product is always the same as a purely functional version:
\begin{minted}{ssr}
Lemma fastProductCorrect l n : evalStateT (fastProduct l) n = Ret (product l).
\end{minted}
where \coqin{evalStateT m s} is defined as \coqin{RunStateT m s >>=
  fun x => Ret x.1}.
This proposition is proved easily with a 10 lines proof script that
consists of an induction on \coqin{l}, rewriting with the equations
in \coqin{exceptStateRunMonad} and application of standard arithmetic
(see Appendix~\ref{sec:fpscript}).

Note that in this section we are dealing with the state monad
transformer applied to the exception monad, and that the last equation
in Sect.~\ref{sec:exthier} specifies that the state is
``backtracked'', i.e., if the state is modified in \coqin{m1} before
an exception occurs, then this change is forgotten before \coqin{m2}
is executed. This is usual in Haskell.
The alternative semantics without backtracking would be closer to,
say, OCaml, where the state is not be backtracked in case of an
exception.  Our program would behave the same way because it happens
that the exception handler ignores the state. However, we would need
to devise new equations to deal with \coqin{Fail} and \coqin{Catch}.

\section{Application 2: Formalization of the Lifting of an Algebraic Operation}
\label{sec:theorem19}

This section is an application of \monae{} extended with the
formalization of sigma-operations and of monad transformers of
Sect.~\ref{sec:extension}.
We prove using equational reasoning a theorem about the lifting of
algebraic operations along monad morphisms.
This corresponds to the theorem that concludes the first of part of
the original paper on modular monad
transformers~\cite[Sect.2--4]{jaskelioff2009esop}.

\newsavebox\mybox
\begin{lrbox}{\mybox} \mintinline{ssr}{#} \end{lrbox}
\def\mysharp{{\usebox{\mybox}}}

In the following \coqin{M} and \coqin{N} are two monads.
Given an \coqin{E}-operation \coqin{op} for \coqin{M} and a monad
morphism \coqin{e} from \coqin{M} to \coqin{N}, a \newterm{lifting} of
\coqin{op} (to \coqin{N}) along \coqin{e} is an \coqin{E}-operation
\coqin{op'} for \coqin{N} such that for all \coqin{X}:
$$
\coqin{e}_{\coqin{X}} \circ \coqin{op}_{\coqin{X}} = \coqin{op'}_{\coqin{X}} \circ (\coqin{E} \mysharp{} \coqin{e}_{\coqin{X}}). 
$$


\tikzset{myblack/.style={->,draw=black}, myblue/.style={->,draw=blue}, myred/.style={->,draw=red},
  mygreen/.style={->,draw=green}, myviolet/.style={->,draw=violet}, myorange/.style={->,draw=orange}}
\def\myarrow#1{\begin{tikzpicture}[baseline=-0.5ex]\draw[#1,->] (0,0)--(0.3,0);\end{tikzpicture}}

\begin{theorem}[Uniform Algebraic Lifting {\cite[Thm.~19]{jaskelioff2009esop}}]
\label{thm:theorem19}
Given an algebraic \coqin{E}-operation \coqin{op} for \coqin{M} and a monad morphism \coqin{e} from \coqin{M} to \coqin{N}, let \coqin{op'} be
$$
\coqin{X} \mapsto \Join{N}{X} \circ \coqin{e}_{(\coqin{N X})} \circ \coqin{op}_{(\coqin{N X})} \circ (\coqin{E} \mysharp{} \Ret{M}{(N X)}).
$$
Then \coqin{op'} is an algebraic \coqin{E}-operation for \coqin{N} and a lifting of \coqin{op} along \coqin{e}.
\end{theorem}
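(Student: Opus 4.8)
The plan is to establish the three parts of the statement separately, each by equational rewriting with the monad laws and the naturality of the unit of \coqin{M}, of \coqin{op}, of \coqin{e}, and of the multiplication $\join$ of \coqin{N}. That the family $\coqin{op'}$ is natural --- so that it can be packaged as a term of type \coqin{E.-operation N} --- is immediate: componentwise in \coqin{X} it is the vertical composition of the natural transformations obtained by right-whiskering the unit of \coqin{M}, \coqin{op}, and \coqin{e} with \coqin{N}, post-composed with $\join$; each factor satisfies the \coqin{naturality} predicate, hence so does the composite. (On the \coq{} side one must insert the canonical natural-transformation structure of the monad morphism \coqin{e}, i.e.\ \coqin{monadM_nt e}, explicitly in the statements.)

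I would first isolate a small \emph{reduction lemma}: for an algebraic \coqin{op} one has, componentwise in \coqin{X},
\[ \Join{M}{X} \circ \coqin{op}_{(\coqin{M X})} \circ (\coqin{E} \mysharp{} \Ret{M}{(M X)}) = \coqin{op}_{\coqin{X}} . \]
Instantiating \coqin{algebraicity} with $f$ the identity (at the appropriate type) yields $\Join{M}{X} \circ \coqin{op}_{(\coqin{M X})} = \coqin{op}_{\coqin{X}} \circ (\coqin{E} \mysharp{} \Join{M}{X})$; precomposing with $\coqin{E} \mysharp{} \Ret{M}{(M X)}$ and using functoriality of \coqin{E} together with the monad unit law making $\Join{M}{X} \circ \Ret{M}{(M X)}$ the identity then collapses the right-hand side to $\coqin{op}_{\coqin{X}}$.

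For the lifting property, I would fix \coqin{X} and rewrite $\coqin{op'}_{\coqin{X}} \circ (\coqin{E} \mysharp{} \coqin{e}_{\coqin{X}})$ in steps: merge the two applications of \coqin{E}; by naturality of the unit of \coqin{M} at $\coqin{e}_{\coqin{X}}$, replace $\Ret{M}{(N X)} \circ \coqin{e}_{\coqin{X}}$ by $(\coqin{M} \mysharp{} \coqin{e}_{\coqin{X}}) \circ \Ret{M}{(M X)}$; by naturality of \coqin{op} at $\coqin{e}_{\coqin{X}}$, move $(\coqin{M} \mysharp{} \coqin{e}_{\coqin{X}})$ out of \coqin{op}; by naturality of \coqin{e}, move it out of $\coqin{e}_{(\coqin{N X})}$, where it becomes $(\coqin{N} \mysharp{} \coqin{e}_{\coqin{X}})$; then apply the compatibility of the monad morphism \coqin{e} with multiplication, $\Join{N}{X} \circ (\coqin{N} \mysharp{} \coqin{e}_{\coqin{X}}) \circ \coqin{e}_{(\coqin{M X})} = \coqin{e}_{\coqin{X}} \circ \Join{M}{X}$ (a consequence of \coqin{MonadMLaws.bind}). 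What remains is $\coqin{e}_{\coqin{X}} \circ \Join{M}{X} \circ \coqin{op}_{(\coqin{M X})} \circ (\coqin{E} \mysharp{} \Ret{M}{(M X)})$, whose tail the reduction lemma rewrites to $\coqin{op}_{\coqin{X}}$, leaving $\coqin{e}_{\coqin{X}} \circ \coqin{op}_{\coqin{X}}$, as required.

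For algebraicity of $\coqin{op'}$, I would check the \coqin{algebraicity} predicate directly. Given $f$ of type \coqin{A -> N B} and $t$ of type \coqin{E (N A)}, let $b$ abbreviate the function \coqin{fun m => m >>= f} (so $b = \Join{N}{B} \circ (\coqin{N} \mysharp{} f)$) and compute $\coqin{op'}_{\coqin{A}}\,t \gg\!= f$: the prefix $\Join{N}{B} \circ (\coqin{N} \mysharp{} f) \circ \Join{N}{A}$ rewrites to $\Join{N}{B} \circ (\coqin{N} \mysharp{} b)$ by naturality of $\join$ at $f$ and associativity of \coqin{N}; then naturality of \coqin{e}, of \coqin{op}, and of the unit of \coqin{M}, each at the morphism $b$, successively pull $b$ through $\coqin{e}_{(\coqin{N A})}$, $\coqin{op}_{(\coqin{N A})}$, and $\Ret{M}{(N A)}$, leaving exactly $\coqin{op'}_{\coqin{B}}\,((\coqin{E} \mysharp{} b)\,t)$ --- note that this part uses only naturality, not algebraicity of \coqin{op}. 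The main obstacle is purely the bookkeeping: keeping the whiskering indices straight and applying each naturality square at the right morphism ($\coqin{e}_{\coqin{X}}$ in the lifting, the bind map $b$ in the algebraicity part), together with establishing the two auxiliary facts (the reduction lemma and the monad-morphism/multiplication compatibility); everything else is routine rewriting with the unit and associativity laws.
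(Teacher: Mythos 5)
Your proposal is correct and takes essentially the same route as the paper: the lifting property is established by the same chain of rewritings (merging the two actions of \texttt{E}, naturality of \texttt{Ret}, naturality of \texttt{op}, naturality of \texttt{e} together with the bind law of monad morphisms), and your ``reduction lemma'' is exactly the paper's intermediate lemma (Jaskelioff's Prop.~17, the fact that \texttt{psi} cancels \texttt{phi} on algebraic operations), which you simply prove inline from algebraicity instantiated at the identity plus the unit law. The only difference is organizational: the paper defines the lifting as \texttt{psi} applied to the vertical composite of \texttt{e} and \texttt{phi op} (the term \texttt{alifting}), so that algebraicity of the lifted operation is inherited directly from the algebraicity of \texttt{psi}, whereas you verify the algebraicity predicate by hand---which is the same argument unfolded, using, as you note, only naturality.
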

\begin{proof}
The proof that \coqin{op'} is a lifting is depicted by the diagram of Fig.~\ref{fig:diag19}.

\def\lbl#1{${}^{#1}$}
\def\lbla{(a)}
\def\lblb{(b)}
\def\lblc{(c)}
\def\lbld{(d)}
\def\lble{(e)}
\def\lblf{(f)}
\def\lblg{(g)}
\def\lblh{(h)}
\def\lbli{(i)}
\def\lblj{(j)}
\def\lblk{(k)}
\def\lbll{(l)}

\begin{figure}[t]
\centering
\begin{tikzpicture}
\node (EMMX) {\coqin{E(M(M X))}} ;
\node (MMX) [right=of EMMX,xshift=2em] {\coqin{M(M X)}} ;
\node (EMNX) [above=of EMMX] {\coqin{E(M(N X))}} ;
\node (MNX) [above=of MMX] {\coqin{M(N X)}} ;

\node (EMX) [below left=of EMMX] {\coqin{E(M X)}};
\node (ENX) [above left=of EMNX] {\coqin{E(N X)}};
\node (MX) [below right=of MMX] {\coqin{M X}};
\node (NX) [above right=of MNX] {\coqin{N X}};

\path (EMX) edge[myblack] node[below] {\lbl{\lbll} $\coqin{op}_{\coqin{X}}$} (MX);
\path (MX) edge[myblack] node[right] {\lbl{\lblk} $\coqin{e}_{\coqin{X}}$} (NX);

\path (EMX) edge[myblue] node[left] {\lbl{\lble} $\coqin{E} \mysharp{} \coqin{e}_{\coqin{x}}$} (ENX) ;
\path (ENX) edge[myblue] node[right] {\lbl{\lblb} $\coqin{E} \mysharp{} \ret$} (EMNX) ;
\path (EMNX) edge[myblue] node[above] {\lbl{\lblc} $\coqin{op}_{\coqin{(N X)}}$} (MNX) ;
\path (MNX) edge[myblue] node[left] {\lbl{\lbld} $\join \circ \coqin{e}_{\coqin{(N X)}}$} (NX) ;

\path (EMX) edge[myred] node[right] {\lbl{\lblf} $\coqin{E} \mysharp{} \ret$} (EMMX) ;
\path (EMMX) edge[myred] node[left] {\lbl{\lblg} $\coqin{E} \mysharp{} (\coqin{M} \mysharp{} \coqin{e}_{\coqin{X}})$} (EMNX) ;

\path (EMMX) edge[mygreen] node[below] {\lbl{\lblh} $\coqin{op}_{\coqin{(M X)}}$} (MMX) ;
\path (MMX) edge[mygreen] node[right] {\lbl{\lbli} $\coqin{M} \mysharp{} \coqin{e}_{\coqin{X}}$} (MNX) ;

\path (MMX) edge[myviolet] node[left] {\lbl{\lblj} $\join$} (MX) ;

\path (ENX) edge[myorange] node[above] {\lbl{\lbla} lifting of $\coqin{op}_{\coqin{X}}$} (NX) ;
\end{tikzpicture}
\caption{Proof of Uniform Algebraic Lifting (Theorem~\ref{thm:theorem19})}
\label{fig:diag19}
\end{figure}
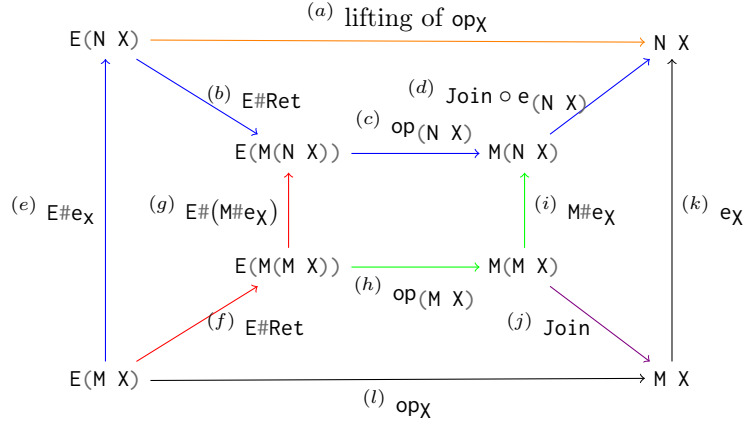

The first step is to show that the path \lbla{} (\myarrow{myorange}) and the 
path \lblb{}-\lblc{}-\lbld{} (\myarrow{myblue}) are equal, which is by definition of a lifting.
The resulting goal is rendered in \coq{} as follows (for any~\coqin{Y}):
\begin{minted}{ssr}
e X (op X Y) = Join (e (N X) (op (N X) ((E # Ret) ((E # e X) Y))))
\end{minted}

The second step of the proof is to show that the path \lble{}-\lblb{}
(\myarrow{myblue}) and the path \lblf{}-\lblg{}
(\myarrow{myred}) are equal, which is achieved by appealing to the functor laws
and the naturality of $\Ret{}{}$. More precisely, to prove
\begin{minted}{ssr}
(E # Ret) ((E # e X) Y) = (E # (M # e X)) ((E # Ret) Y),
\end{minted}
it suffices to execute the following sequence of rewritings:
\begin{minted}[fontsize=\small]{ssr}
rewrite -[in LHS]compE -functor_o. (* functor composition law in the lhs *)
rewrite -[in RHS]compE -functor_o. (* functor composition law in the rhs *)
(* the goal is now: (E # (Ret \o e X)) Y = (E # (M # e X \o Ret)) Y *)
rewrite (natural RET).             (* naturality of ret *)
(* the goal is now: (E # (Ret \o e X)) Y = (E # (Ret \o FId # e X)) Y *)
by rewrite FIdf.                   (* property of the identity functor *)
\end{minted}

The next step is to show that the path \lblg{}-\lblc{} and the
path \lblh{}-\lbli{} (\myarrow{mygreen}) are equal; this is by naturality of
\coqin{op}.

The next step is to show that the paths
\lbli{}-\lbld{} and
\lblj{}-\lblk{} are equal, which is by 
the bind law of monad morphisms and naturality of monad morphisms.

The last step (equality of the paths
\lblf{}-\lblh{}-\lblj{} and
\lbll{}) amounts to proving:
\begin{minted}{ssr}
op X Y = Join (op (M X) ((E # Ret) Y)).
\end{minted}
This step depends of an intermediate
lemma~\cite[Prop.~17]{jaskelioff2009esop}.  Let us explain it because
it introduces functions and we will use one of them again later in
this paper.
Given a natural transformation \coqin{n : E ~> M}, \coqin{psi} is an
\coqin{E}-operation for \coqin{M} defined by the function
$\coqin{X} \mapsto \Join{}{X} \circ \coqin{n}$.
Given an \coqin{E}-operation for \coqin{M}, \coqin{phi}
is a natural transformation \coqin{E ~> M} defined by the function
$\coqin{X} \mapsto \coqin{op}_{\coqin{X}} \circ (\coqin{E} \mysharp{} \Ret{}{})$.
It turns out that \coqin{psi} is algebraic and that \coqin{psi}
cancels \coqin{phi} for algebraic operations (proofs omitted here, see~\cite{monae}), which
proves the last goal.

The second part of the proof is to prove that \coqin{op'} is
algebraic.  This is a direct consequence of the fact that \coqin{psi}
is algebraic.

It should be noted that, even though the statement of the theorem
defines the lifting as the composition of the functions \coqin{Join},
\coqin{e}, etc., it is actually much more practical from the view
point of formal proof to define it as \coqin{psi (monadM_nt e \v phi
  op)}, i.e., the application of the function \coqin{psi} to the
vertical composition of \coqin{e} and \coqin{phi op}, because this
object (let us call it \coqin{alifting}) is endowed with the
properties of algebraic operations, whose immediate availability
facilitates the formal proof.
\end{proof}

The reader can observe in Appendix~\ref{appendix:theorem19} that the
complete proof script for Theorem \ref{thm:theorem19} essentially
amounts to a small number of rewritings, as has been partially
illustrated in the proof just above.

\subsection*{Example: Lifting the get Operation along the Exception Monad Transformer}

Let us assume the availability of a type \coqin{S} for states and of a
type \coqin{Z} for exceptions. We consider \coqin{M} to be the state
monad.  To define the lifting of the get operation of \coqin{M} (more
precisely its algebraic version seen in
Sect.~\ref{sec:algebraic_operations}) along \coqin{exceptT}
(Sect.~\ref{sec:transf}) it suffices to call the \coqin{alifting}
function with the right arguments:
\begin{minted}{ssr}
Let M S : monad := ModelState.state S.

Definition aLGet {Z S} : (StateOps.Get.func S).-aoperation (exceptT Z (M S)) :=
  alifting (get_aop S) (Lift (exceptT Z) (M S)).
\end{minted}
By the typing, we see that the result \coqin{aLGet} is also an
algebraic operation.

For example, we can check that the resulting sigma-operation is indeed
the get operation of the transformed monad:
\begin{minted}{ssr}
Goal forall Z (S : UU0) X (k : S -> exceptT Z (M S) X),
  aLGet _ k = StateOps.get_op _ k. by [].
\end{minted}

\section{Application 3: Formalization of the Lifting of Sigma-Operations}
\label{sec:lifting_operations}

This section is an application of our formalization of
sigma-operations and (functorial) monad transformers of
Sect.~\ref{sec:extension} and also of Theorem~\ref{thm:theorem27}.
Using \monae{}, we give an equational proof for a theorem that
generalizes the lifting of Sect.~\ref{sec:theorem19} which was
restricted to algebraic operations.
This corresponds to the second part of the original paper on modular
monad transformers~\cite[Sect.~5]{jaskelioff2009esop}.

This application requires us to use a non-standard setting of \coq{}.
Section~\ref{sec:codensity} introduces a monad transformer whose
formalization requires impredicativity.
Section~\ref{sec:naturality_of_m} focuses on the main technical
difficulty that we identified when going from the pencil-and-paper
proofs to a formalization using \coq{}: an innocuous-looking proof
that actually calls for an argument based on parametricity.
We conclude this section with the formal statement
of~\cite[Thm.~27]{jaskelioff2009esop} and its formal proof
(Sect.~\ref{sec:theorem27}).

\subsection{Impredicativity Setting for the Codensity Monad Transformer}
\label{sec:codensity}

To implement the lifting of an operation along a functorial monad
transformer, Jaskelioff introduces a monad transformer
\coqin{codensityT} related to the construction of the codensity monad
for an endofunctor~\cite[Def.~23]{jaskelioff2009esop}.  Its
formalization requires impredicativity and if nothing is done, the
standard setting of \coq{} would lead to \newterm{universe
  inconsistencies}.

Let us give a bit of background on impredicativity with \coq{}.
The type theory of \coq{} is constrained by a hierarchy of
universes \Set{}, $\coqin{Type}_1$, $\coqin{Type}_2$, etc.
The \coq{} language only provides the keywords \Set{} and
\coqin{Type}, the \coq{} system figures out the right indices for
\coqin{Type}s.
Universes are not impredicative by default; yet, \coq{} has
an option (\impredicativeset{}) that changes the logical theory by
declaring the universe \Set{} as impredicative.
This option is useful in \coq{} to formalize System $F$/$F\omega$,
their impredicative encodings of data types, and for extraction of
programs in CPS style.
It is known to be inconsistent with some standard axioms of classical
mathematics \cite{impredicativeset,geuvers2001} but we do not rely on
them here\footnote{More precisely, the development we discuss in this
  paper~\cite[directory \coqin{impredicative_set}]{monae} uses
  together with impredicative \Set{} only the standard axioms of
  functional extensionality and proof irrelevance, which are
  compatible.}.
To keep a firm grip on the universes involved, we fix a few universes
at the beginning of the formal development~\cite[file
\coqin{ihierarchy.v}]{monae}:
\begin{minted}[escapeinside=77]{ssr}
Definition UU2 : Type := Type.
Definition UU1 : UU2 := Type.
Definition UU0 : UU1 := 7\Set{}7.
\end{minted}
and only use them instead of \Set{} or \coqin{Type} (so far we have
been using \coqin{UU0} but it is really another name for the native
\Set{} universe).

Now that we have set \coq{} appropriately, we define the codensity
monad transformer.
Given a monad~\coqin{M}, a computation of a value of type \coqin{A} in the monad
\coqin{codensityT M} has type \coqin{forall (B : UU0), (A -> M B) -> M B}
of type \coqin{UU0}: here, impredicativity comes into play.
We abbreviate this type expression as \coqin{MK M A} in the following.
We do not detail the formalization of \coqin{codensityT} because it
follows the model of the exception monad transformer that we explained in
Sect.~\ref{sec:transf}. Let us just display its main ingredients,
i.e., the unit, bind, and lift
operations~\cite[Def.~23]{jaskelioff2009esop}:
\begin{minted}{ssr}
Definition retK (A : UU0) (a : A) : MK M A :=
  fun (B : UU0) (k : A -> M B) => k a.
Definition bindK (A B : UU0) (m : MK M A) f : MK M B :=
  fun (C : UU0) (k : B -> M C) => m C (fun a : A => (f a) C k).
(* definition of codensityTmonadM omitted *)
Definition liftK (A : UU0) (m : M A) : codensityTmonadM A :=
  fun (B : UU0) (k : A -> M B) => m >>= k.
\end{minted}
We can check in \coq{} that they indeed give rise to a monad
transformer in the sense of Sect.~\ref{sec:transf}, so that
\coqin{codensityT} does have the type \coqin{monadT}
(Sect.~\ref{sec:transf}) of monad transformers.

\subsection{Parametricity to Prove Naturality}
\label{sec:naturality_of_m}

The monad transformer \coqin{codensityT} is needed to state the
theorem about the lifting of sigma-operations and in particular to
define a natural transformation called
\coqin{from}~\cite[Prop.~26]{jaskelioff2009esop}.
Formally, we can define \coqin{from}'s components as follows
(\coqin{M} is a monad):
\begin{minted}{ssr}
Definition from_component : codensityT M ~~> M :=
  fun (A : UU0) (c : codensityT M A) => c A Ret.
\end{minted}
At first sight, the naturality of \coqin{from_component} seems obvious
and indeed no proof is given in the original paper on modular monad
transformers~(see the first of the two statements
of~\cite[Prop.~26]{jaskelioff2009esop}). It is however a bit more
subtle than it appears and, as a matter of fact, it is shown in a
later paper that this claim is wrong: $\coqin{from}_{\coqin{M}}$
cannot be a natural transformation in the setting of $F\omega$
\cite[p.~4452]{jaskelioff2010}. We explain how we save the day in
\coq{} by relying on parametricity.

We state the naturality of $\coqin{from}_{\coqin{M}}$ as
\coqin{naturality (codensityT M) M from_component}.
This goal reduces\footnote{By functional extensionality, by naturality of
  {\tt Ret}, and by definition of {\tt from\us{}component}.} to:
\begin{minted}{ssr}
forall (m : codensityT M A) (h : A -> B), (M # h \o m A) Ret = m B (M # h \o Ret).
\end{minted}
This last goal is an instance of a more general statement
(recall from Sect.~\ref{sec:codensity} that \coqin{MK M} is the action on the objects
of the monad \coqin{codensityT M}):
\begin{minted}{ssr}
forall (M : monad) (A : UU0) (m : MK M A) (A1 B : UU0) (h : A1 -> B),
  M # h \o m A1 = m B \o (fun f : A -> M A1 => (M # h) \o f).
\end{minted}
This is actually a special case of naturality as one can observe by rewriting
the type of \coqin{m} with the appropriate functors: 
\coqin{exponential_F A \O M} and \coqin{M}, where \coqin{exponential_F} is
the functor whose action on objects is \coqin{forall X : UU0, A ->
  X}:
\begin{minted}{ssr}
forall (M : monad) (A : UU0) (m : MK M A), naturality (exponential_F A \O M) M m
\end{minted}
Unfortunately, we are not able to prove it in plain \coq{} (with
or without impredicative \Set{}), even if we consider particular functors
\coqin{M} such as the identity functor.

The solution consists in assuming an axiom of parametricity for each
functor~\coqin{M} and derive naturality from it.  That is, we follow
the approach advocated by Wadler~\cite{DBLP:conf/fpca/Wadler89}.  It
has been shown to be sound in
\coq{}~\cite{DBLP:conf/csl/KrishnaswamiD13,bernardy2012lics,keller2012csl,atkey2014popl}
and it is implemented by the \paramcoq{} plugin~\cite{keller2012csl}.
For instance, let us describe what happens when \coqin{M} is the list
monad. First, we rewrite the naturality statement above in the case of
the list functor (\coqin{map} is the map function of lists):
\begin{minted}{ssr}
forall (X Y : UU0) (f : X -> Y) (g : A -> seq X),
  (map f \o m X) g = (m Y \o (exponential_F A \O M) # f) g.
\end{minted}
The proof proceeds by induction on a proof-term of type
\begin{minted}{ssr}
list_R X Y (fun x y => f x = y) (m X g) ((m Y \o (exponential_F A \O M) # f) g)
\end{minted}
where \coqin{list_R X Y X_R l1 l2} means that the elements of
lists~\coqin{l1} and~\coqin{l2} are pairwise related by the relation
\coqin{X_R}.
The role of \paramcoq{} is to generate definitions (including
\coqin{list_R}) for us to be able to produce this proof.
Concretely, starting from \coqin{MK}, \paramcoq{} generates the
logical relation \coqin{T_R} of type (it is obtained by induction on
types~\cite{Goubault-LarrecqLN08}):
\begin{minted}{ssr}
(forall X : UU0, (A -> list X) -> list X) ->
  (forall X : UU0, (A -> list X) -> list X) -> UU0
\end{minted}
Here, \coqin{T_R m1 m2} expands to:
\begin{minted}{ssr}
forall (X1 X2 : UU0) (RX : X1 -> X2 -> UU0)
  (f1 : A -> list X1) (f2 : A -> list X2),
(forall a1 a2 : A, a1 = a2 -> list_R X1 X2 RX (f1 a1) (f2 a2)) ->
list_R X1 X2 RX (m1 X1 f1) (m2 X2 f2)
\end{minted}
It is then safe to assume the following parametricity axiom:
\begin{minted}{ssr}
Axiom param : forall m : MK M A, T_R m m.
\end{minted}
The application of \coqin{param} is the first step to produce the
proof required for the induction:
\begin{minted}[escapeinside=77]{ssr}
have : list_R X Y (fun x y = f x = y) (m X g) ((m Y \o (exponential_F A \O M) # f) g).
  apply: param.
  (* 7$\forall$7 a a', a = a' ->
    list_R X Y (fun x y = f x = y) (g a) (((exponential_F A \O M) # f) g a') *)
\end{minted}
The goal generated is proved by induction on \coqin{g a} which is a list.

The same approach is applied to other monads (identity, exception,
option, state)~\cite[file
\coqin{iparametricity_codensity.v}]{monae}.

\subsection{Lifting of Sigma-operations: Formal Statement}
\label{sec:theorem27}

Before stating and proving the main theorem about lifting of
sigma-operations, we formally define a special algebraic
operation~\cite[Def.~25]{jaskelioff2009esop}.
Let \coqin{E} be a functor, \coqin{M} be a monad, and \coqin{op} be an
\coqin{E}-operation for \coqin{M}. The natural transformation
\coqin{kappa} 
from \coqin{E} to \coqin{codensityT M}
is defined by the components
$$
\coqin{A}, (\coqin{s : E A}), \coqin{B}, (\coqin{k : A -> M B}) \mapsto
\coqin{op} \, \coqin{B} \, ((\coqin{E} \mysharp{} \coqin{k}) \, \coqin{s})
$$
and \coqin{psik} is the algebraic \coqin{E}-operation for the monad
\coqin{codensityT M} defined by:
\begin{minted}{ssr}
Definition psik : E.-aoperation (codensityT M) := psi (kappa op).
\end{minted}
Recall that the function \coqin{psi} has been defined in the proof of
Theorem~\ref{thm:theorem19}.

\begin{theorem}[Uniform Lifting {\cite[Thm.~27]{jaskelioff2009esop}}]
\label{thm:theorem27}
Let \coqin{M} be a monad such that  any computation
\coqin{m : MK M A} is natural in the sense of 
Sect.~\ref{sec:naturality_of_m} (hypothesis \coqin{naturality_MK}).
  Let \coqin{op} be an \coqin{E}-operation for \coqin{M} and
  \coqin{t} be a functorial monad transformer.
  We denote:
  \begin{itemize}
  \item by \coqin{op1} the term \coqin{Hmap t (from naturality_MK)}
    (see Sect.~\ref{sec:naturality_of_m} for \coqin{from}, \coqin{Hmap} was defined in Sect.~\ref{sec:fmt}),
  \item by \coqin{op2} the algebraic lifting along \coqin{Lift t} of
    \coqin{(psik op)} (see just above for \coqin{psik}), and
  \item by \coqin{op3} the term \mintinline{ssr}{E ## Hmap t
      (monadM_nt (Lift codensityT M))} (see Sect.~\ref{sec:codensity}
    for \coqin{codensityT}).
  \end{itemize}
  Then the operation \coqin{op1 \v op2 \v op3} (where \coqin{\v} is the vertical composition seen in Sect.~\ref{sec:background}) is a lifting of \coqin{op} along \coqin{t}.
\end{theorem}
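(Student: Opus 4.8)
The plan is a diagram chase: using Theorem~\ref{thm:theorem19} once and the naturality of \coqin{Lift t} (item~(3) of the definition of a functorial monad transformer in Sect.~\ref{sec:fmt}) twice, I would reduce the statement to a single pointwise identity internal to the codensity monad. Unfolding the definition of a lifting and of \coqin{op3} (so that its \coqin{X}-component is \coqin{E #} applied to the \coqin{X}-component of \coqin{Hmap t (monadM_nt (Lift codensityT M))}), the goal becomes, for each \coqin{X},
\[
(\coqin{Lift t M})_{\coqin{X}} \circ \coqin{op}_{\coqin{X}} \;=\;
(\coqin{op1})_{\coqin{X}} \circ (\coqin{op2})_{\coqin{X}} \circ (\coqin{op3})_{\coqin{X}} \circ (\coqin{E} \mysharp{} (\coqin{Lift t M})_{\coqin{X}}).
\]
I would rewrite the right-hand side from the outside in. First, the functor-composition law for \coqin{E} together with the naturality of \coqin{Lift t} at the monad morphism \coqin{Lift codensityT M} (of type \coqin{M ~> codensityT M}) turns the inner block $(\coqin{op3})_{\coqin{X}} \circ (\coqin{E} \mysharp{} (\coqin{Lift t M})_{\coqin{X}})$ into $(\coqin{E} \mysharp{} (\coqin{Lift t (codensityT M)})_{\coqin{X}}) \circ (\coqin{E} \mysharp{} (\coqin{Lift codensityT M})_{\coqin{X}})$. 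Next, since \coqin{op2} is the algebraic lifting along \coqin{Lift t} of \coqin{(psik op)} and \coqin{psik op = psi (kappa op)} is an \emph{algebraic} \coqin{E}-operation for \coqin{codensityT M} (every operation \coqin{psi n} is algebraic, as recalled in the proof of Theorem~\ref{thm:theorem19}), Theorem~\ref{thm:theorem19} applies and says that \coqin{op2} is itself a lifting of \coqin{(psik op)} along \coqin{Lift t (codensityT M)}; this collapses $(\coqin{op2})_{\coqin{X}} \circ (\coqin{E} \mysharp{} (\coqin{Lift t (codensityT M)})_{\coqin{X}})$ to $(\coqin{Lift t (codensityT M)})_{\coqin{X}} \circ (\coqin{psik op})_{\coqin{X}}$. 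Finally, the naturality of \coqin{Lift t} at the monad morphism \coqin{from naturality_MK} (of type \coqin{codensityT M ~> M}) collapses $(\coqin{op1})_{\coqin{X}} \circ (\coqin{Lift t (codensityT M)})_{\coqin{X}}$ to $(\coqin{Lift t M})_{\coqin{X}} \circ \coqin{from}_{\coqin{X}}$. Cancelling the leading $(\coqin{Lift t M})_{\coqin{X}}$, the goal left to prove is $\coqin{op}_{\coqin{X}} = \coqin{from}_{\coqin{X}} \circ (\coqin{psik op})_{\coqin{X}} \circ (\coqin{E} \mysharp{} (\coqin{Lift codensityT M})_{\coqin{X}})$.

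This residual identity lives entirely within the codensity monad and splits in two. On the one hand, unfolding \coqin{psik op = psi (kappa op)}, the definition of \coqin{psi}, the \coqin{Join} and \coqin{bindK} of \coqin{codensityT M}, the clause \coqin{from_component A c = c A Ret}, and the components of \coqin{kappa op}, one gets $\coqin{from}_{\coqin{X}} \circ (\coqin{psik op})_{\coqin{X}} = \coqin{op}_{\coqin{X}} \circ (\coqin{E} \mysharp{} \coqin{from}_{\coqin{X}})$ --- essentially \cite[Prop.~26]{jaskelioff2009esop}, holding by reduction and using no monad law. On the other hand, $\coqin{from}_{\coqin{X}} \circ (\coqin{Lift codensityT M})_{\coqin{X}} = \mathrm{id}$, because \coqin{from_component} applied to the value \coqin{liftK m} of \coqin{Lift codensityT M} is \coqin{m >>= Ret}, which is \coqin{m} by the right-unit law. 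Composing the two, together again with the functor-composition law for \coqin{E} and \coqin{E # id = id}, gives $\coqin{from}_{\coqin{X}} \circ (\coqin{psik op})_{\coqin{X}} \circ (\coqin{E} \mysharp{} (\coqin{Lift codensityT M})_{\coqin{X}}) = \coqin{op}_{\coqin{X}} \circ (\coqin{E} \mysharp{} (\coqin{from}_{\coqin{X}} \circ (\coqin{Lift codensityT M})_{\coqin{X}})) = \coqin{op}_{\coqin{X}}$, which finishes the chase. The second half of the statement (that \coqin{op1 \v op2 \v op3} is a genuine \coqin{E}-operation for \coqin{t M}) requires nothing extra: each of the three factors is a natural transformation, and so is their vertical composition.

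The step that truly costs something is not in this chase but is the prerequisite for writing \coqin{op1} down at all: forming \coqin{Hmap t (from naturality_MK)} presupposes that \coqin{from_component} is a natural transformation, and --- as developed at length in Sect.~\ref{sec:naturality_of_m} --- this is exactly where parametricity is needed, which is why the hypothesis \coqin{naturality_MK} is carried through the statement. Granting that, I expect the remaining hazards to be purely organisational: keeping the three notations \coqin{#}, \coqin{##} and \coqin{\v} on functors valued in \coqin{monad} distinct (recall from Sect.~\ref{sec:fmt} that the ordinary \coqin{naturality} predicate does not even typecheck there), and instantiating the naturality of \coqin{Lift t} with the correct monad morphism on each of its two uses --- once with \coqin{Lift codensityT M}, once with \coqin{from}. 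The small codensity computation $\coqin{from}_{\coqin{X}} \circ (\coqin{psik op})_{\coqin{X}} = \coqin{op}_{\coqin{X}} \circ (\coqin{E} \mysharp{} \coqin{from}_{\coqin{X}})$ is the place where one must be most attentive to expose the right definitions (\coqin{psi}, \coqin{Join}, \coqin{bindK}, \coqin{kappa}, \coqin{from_component}), but it should then discharge essentially by computation.
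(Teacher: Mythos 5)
Your proposal is correct and matches the paper's proof essentially step for step: unfold the definition of lifting, use the functor-composition law and the naturality of \texttt{Lift t} (the paper's ``naturality of \texttt{Hmap}'') once at \texttt{Lift codensityT M} to handle \texttt{op3} and once at \texttt{from} to handle \texttt{op1}, apply Theorem~\ref{thm:theorem19} to collapse \texttt{op2}, and reduce to the residual identity expressing \texttt{op} as \texttt{from} composed with \texttt{psik op} and \texttt{E \# (Lift codensityT M)}. The only difference is that the paper discharges that identity by citing its lemma \texttt{psikE} (Prop.~26 of Jaskelioff), while you inline a correct proof of it by computation together with the right-unit law.
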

\begin{proof}
The proof is depicted by the diagram in Fig.~\ref{fig:diag27}.

\def\gbl#1{${}^{#1}$}
\def\gbla{(a)}
\def\gblb{(b)}
\def\gblc{(c)}
\def\gbld{(d)}
\def\gble{(e)}
\def\gblf{(f)}
\def\gblg{(g)}
\def\gblh{(h)}
\def\gbli{(i)}
\def\gblj{(j)}
\def\gblk{(k)}
\def\gbll{(l)}

\begin{figure}[t]
\centering
\begin{tikzpicture}
\node (EKMX)                  {\coqin{E( K M X )}} ;
\node (KMX) [right=of EKMX]   {\hspace{1em}\coqin{K M X}} ;
\node (ETKMX) [above=of EKMX] {\coqin{E(T K M X)}} ;
\node (TKMX) [above=of KMX]   {\coqin{T K M X}} ;

\node (EMX) [below left=of EKMX,xshift=-1.8cm]   {\coqin{E( M X )}};
\node (ETMX) [above left=of ETKMX,xshift=-1.8cm] {\coqin{E(T M X)}};
\node (MX) [below right=of KMX,xshift=1.8cm]    {\hspace{1em}\coqin{M X}};
\node (TMX) [above right=of TKMX,xshift=1.8cm]  {\coqin{T M X}};

\path (EMX) edge[myblack] node[below] {\footnotesize \gbl{\gbll} $\coqin{op}_{\coqin{X}}$} (MX);
\path (MX) edge[myblack] node[above,rotate=-90] {\footnotesize \gbl{\gblk} $(\coqin{Lift t M})_{\coqin{X}}$} (TMX);

\path (EMX) edge[myblue] node[above,rotate=90] {\footnotesize \gbl{\gble} $\coqin{E} \mysharp{} (\coqin{Lift t M})_{\coqin{X}}$} (ETMX) ;
\path (ETMX) edge[myblue] node[right] {\footnotesize \gbl{\gblb} $\coqin{op3}_{\coqin{X}}$} (ETKMX) ;
\path (ETKMX) edge[myblue] node[above] {\footnotesize \gbl{\gblc} $\coqin{op2}_{\coqin{X}}$} (TKMX) ;
\path (TKMX) edge[myblue] node[left] {\footnotesize \gbl{\gbld} $\coqin{op1}_{\coqin{X}}$} (TMX) ;

\path (EMX) edge[myred] node {\footnotesize \gbl{\gblf} $\coqin{E} \mysharp{} (\coqin{Lift codensityT M})_{\coqin{X}}$} (EKMX) ;
\path (EKMX) edge[myred] node[left] {\footnotesize \gbl{\gblg} \scriptsize $\coqin{E} \mysharp{} (\coqin{Lift t (codensityT M)})_{\coqin{X}}$} (ETKMX) ;

\path (EKMX) edge[mygreen] node[below] {\footnotesize \gbl{\gblh} $(\coqin{psik op})_{\coqin{X}}$} (KMX) ;
\path (KMX) edge[mygreen] node[right] {\footnotesize \gbl{\gbli} \scriptsize $(\coqin{Lift t (codensityT M)})_{\coqin{X}}$} (TKMX) ;

\path (KMX) edge[myviolet] node[left] {\footnotesize \gbl{\gblj} $\coqin{from_nt}_{\coqin{X}}$} (MX) ;

\path (ETMX) edge[myorange] node[above] {\footnotesize \gbl{\gbla} lifting of $\coqin{op}_{\coqin{X}}$} (TMX) ;
\end{tikzpicture}
\caption{Proof of Uniform Lifting (Theorem~\ref{thm:theorem27})}
\label{fig:diag27}
\end{figure}
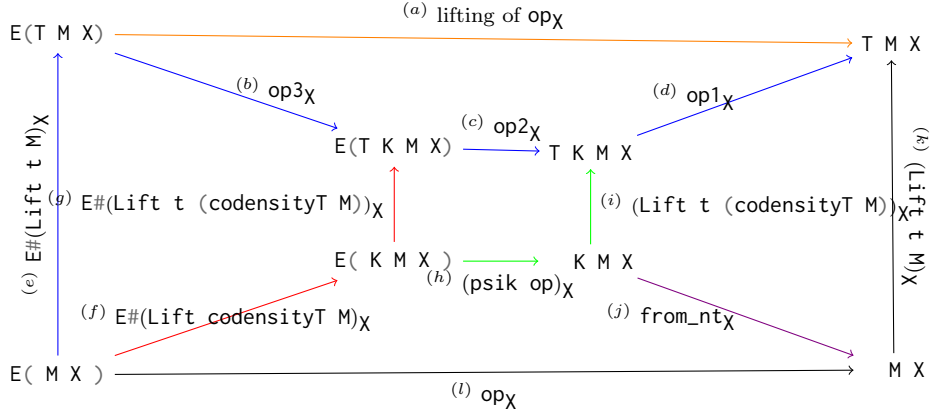

The first step of the proof is to unfold the definition of lifting
(which amounts to showing that the paths \gbla{} (\myarrow{myorange}) and
\gblb{}-\gblc{}-\gbld{} are equal). Consequently, the proof goal is rendered
in \coq{} as follows (for all \coqin{X : UU0}):
\begin{minted}{ssr}
Lift t M X \o op X = (op1 \v op2 \v op3) X \o E # Lift t M X
\end{minted}
The second step of the proof is to show that the path \gble{}-\gblb{}
and the path \gblf{}-\gblg{} (\myarrow{myred}) are equal, which is achieved by appealing
to the law of functor composition and the naturality of \coqin{Hmap}.

The next step is to show that the path \gblg{}-\gblc{}
and the path \gblh{}-\gbli{} (\myarrow{mygreen}) are equal; this is by 
applying Theorem~\ref{thm:theorem19}.

At this point, the goal becomes:
\begin{minted}{ssr}
Lift t M X \o op X = (op1 X \o (Lift t (codensityT M) X \o psik op X))
                     \o E # Lift codensityT M X
\end{minted}

It happens that we can use the naturality of \coqin{Hmap} to make
the \coqin{from} function appear in the right-hand side of the goal:
\begin{minted}{ssr}
Lift t M X \o op X = ((Lift t M X \o from naturality_MK X)
                     \o psik op X) \o E # Lift codensityT M X
\end{minted}

The last step is to identify \coqin{op} with the composition of
the \coqin{from} function, \coqin{psik op}, and \mintinline{ssr}{E # List codensityT M},
which is the purpose of a lemma~\cite[Prop.~26]{jaskelioff2009esop}
(see \cite[file \coqin{ifmt_lifting.v}, lemma \coqin{psikE}]{monae}).
\end{proof}

The proof script corresponding to the proof above is reproduced in
Appendix~\ref{appendix:theorem27}.

Finally, we show that, for all the monad transformers considered in
this paper, the lifting of an algebraic operation provided by
Theorem~\ref{thm:theorem27} coincides with the one provided by
Theorem~\ref{thm:theorem19}. This corresponds to the last results
about modular monad transformers~\cite[Prop.~28]{jaskelioff2009esop}.

\section{Related Work}
\label{sec:related_work}

The example we detail in Sect.~\ref{sec:application} adds to several
examples of monadic equational
reasoning~\cite{gibbons2011icfp,gibbons2012utp,mu2019tr2,mu2019tr3,pauwels2019mpc,mu2020flops}.
Its originality is to use a parameterized interface and the
\coqin{RunStateT} command, which are typical of programs written using
monad transformers.

Huffman formalizes three monad transformers in the Isabelle/HOL proof
assistant~\cite{huffman2012icfp}.
This experiment is part of a larger effort to overcome the limitations
of Isabelle/HOL type classes to reason about Haskell programs that use
(Haskell) type classes.
Compared to Isabelle/HOL, the type system of \coq{} is more expressive
so that we could formalize a much larger theory, even relying on extra
features of \coq{} such as impredicativity and parametricity to do so.

Maillard proposes a meta language to define monad transformers in the
\coq{} proof assistant~\cite[Chapter~4]{maillard2019phd}.
It is an instance implementation of one element of a larger framework
to verify programs with monadic effects using Dijskstra
monads~\cite{maillard2019icfp}.
The lifting of operations is one topic of this framework but it does
not go as far as the deep analysis of
Jaskelioff~\cite{jaskelioff2009phd,jaskelioff2009esop,jaskelioff2010}.

There are also formalizations of monads and their morphisms that focus
on the mathematical aspects, e.g., UniMath~\cite{UniMath}. However,
the link to the monad transformers of functional programming is not
done.

Monad transformers is one approach to combine effects.  Algebraic
effects is a recent alternative. It turns out that the two are
related~\cite{schrijvers2019haskell} and we have started to extend
\monae{} to clarify formally this relation.

\section{Conclusions and Future Work}
\label{sec:conclusion}

In this paper, we extended \monae{}, a formalization of monadic
equational reasoning, with monad transformers. We explained how it
helps us to better organize the models of monads, thanks to
sigma-operations in particular. We also explained how to extend the
hierarchy of monad interfaces to handle programs written with monad
transformers in mind. We also used our formalization of monad
transformers to formalize the theory of liftings of modular monad
transformers~\cite{jaskelioff2009esop} using equational reasoning. For
that purpose, we needed to fix the original presentation by using
\coq{}'s impredicativity and parametricity.

The main result of this paper is a robust, formal theory of monad
transformers. We plan to extend the hierarchy of monad interfaces of
\monae{} similarly to how we proceeded for
\coqin{exceptStateRunMonad}. Such an extension will call for more
models to be formalized and we expect our formalized theory of
liftings to be useful on this occasion.

Results up to Sect.~\ref{sec:theorem19} hold whether or not \Set{} is
impredicative. In contrast, the setting of
Sect.~\ref{sec:lifting_operations} conflicts with \monae{} programs
relying on some data structures from the \mathcomp{}
library~\cite{mathcomp} (such as fixed-size lists) or from the
\infotheo{} library~\cite{infotheo} (such as probability
distributions) because these data structures are in \coqin{Type} and
cannot be computed with monads in~\Set{}.
One could think about reimplementing them but this is a substantial
amount of work.
A cheap way to preserve these data structures together with the
theorem on lifting of sigma-operations is to disable universe checking
as soon as this theorem is used; this way, monads can stay
in~\coqin{Type}.
Disabling universe checking is not ideal because it is unsound in
general\footnote{One can derive \coqin{False} by applying a variant of
  Hurkens paradox (see
  \url{https://coq.inria.fr/library/Coq.Logic.Hurkens.html}).}; note
however that this is sometimes used for the formalization of
category-theoretic notions (e.g., \cite[Sect.~6]{ahrens2017csl}).
How to improve this situation is another direction for future work.

\paragraph{Acknowledgements}
We acknowledge the support of the JSPS KAKENHI Grant Number 18H03204.
We thank all the participants of the JSPS-CNRS bilateral program
``FoRmal tools for IoT sEcurity'' (PRC2199) for fruitful discussions.
We also thank Takafumi Saikawa for his comments. This work is based on
joint work with C\'elestine Sauvage~\cite{sauvage2020jfla}.

\appendix

\section{Proof Scripts for the Three Applications of This Paper}

The following proof scripts have been copied verbatim from
\monae{}~\cite{monae} for the reader's convenience.
We claim that these proof scripts are readable and short in the sense
that each line corresponds to a genuine proof step and that there are
few administrative tactics hampering reading.
The terseness of the \ssreflect{} tactic language could actually make
these proof scripts much shorter but that is not our point here.
In particular, we make explicit the proof steps of
Theorems~\ref{thm:theorem19} and~\ref{thm:theorem27} (in
Appendices~\ref{appendix:theorem19} and \ref{appendix:theorem27}) with
\coqin{transitivity} steps or explicit \coqin{rewrite} steps followed
by indented (sub-)proof scripts.

\subsection{Proof Script for the Correctness of \coqin{fastProduct}}
\label{sec:fpscript}

The following proof script can be found in~\cite[file
\coqin{example_transformer.v}]{monae}.

\begin{minted}{ssr}
Lemma fastProductCorrect l n :
  evalStateT (fastProduct l) n = Ret (product l).
Proof.
rewrite /fastProduct -(mul1n (product _)); move: 1.
elim: l => [ | [ | x] l ih] m.
- rewrite muln1 bindA bindretf putget.
  rewrite /evalStateT RunStateTCatch RunStateTBind RunStateTPut bindretf.
  by rewrite RunStateTRet RunStateTRet catchret bindretf.
- rewrite muln0.
  rewrite /evalStateT RunStateTCatch RunStateTBind RunStateTBind RunStateTPut.
  by rewrite bindretf RunStateTFail bindfailf catchfailm RunStateTRet bindretf.
- rewrite [fastProductRec _]/=.
  by rewrite -bindA putget bindA bindA bindretf -bindA -bindA putput ih mulnA.
Qed.
\end{minted}

\subsection{Proof Script for Theorem~\ref{thm:theorem19} \cite[Thm.~19]{jaskelioff2009esop}}
\label{appendix:theorem19}

The following proof script can be found in~\cite[file
\coqin{monad_transformer.v}]{monae}.

\begin{minted}{ssr}
Section uniform_algebraic_lifting.
Variables (E : functor) (M : monad) (op : E.-aoperation M).
Variables (N : monad) (e : monadM M N).

Definition alifting : E.-aoperation N := psi (monadM_nt e \v phi op).

Lemma aliftingE :
  alifting = (fun X => Join \o e (N X) \o phi op (N X)) :> (_ ~~> _).
Proof. by []. Qed.

Theorem uniform_algebraic_lifting : lifting op e alifting.
Proof.
move=> X.
apply fun_ext => Y.
rewrite /alifting !compE psiE vcompE phiE !compE.
rewrite (_ : (E # Ret) ((E # e X) Y) =
             (E # (M # e X)) ((E # Ret) Y)); last first.
  rewrite -[in LHS]compE -functor_o.
  rewrite -[in RHS]compE -functor_o.
  rewrite (natural RET).
  by rewrite FIdf.
rewrite (_ : op (N X) ((E # (M # e X)) ((E # Ret) Y)) =
             (M # e X) (op (M X) ((E # Ret) Y))); last first.
  rewrite -(compE (M # e X)).
  by rewrite (natural op).
transitivity (e X (Join (op (M X) ((E # Ret) Y)))); last first.
  rewrite joinE monadMbind.
  rewrite bindE -(compE _ (M # e X)).
  by rewrite -natural.
by rewrite -[in LHS](phiK op).
Qed.
End uniform_algebraic_lifting.
\end{minted}

\subsection{Proof Script for Theorem~\ref{thm:theorem27} \cite[Thm.~27]{jaskelioff2009esop}}
\label{appendix:theorem27}

The following proof script can be found in~\cite[file
\coqin{ifmt_lifting.v}]{monae}.

\begin{minted}{ssr}
Section uniform_sigma_lifting.
Variables (E : functor) (M : monad) (op : E.-operation M) (t : FMT).
Hypothesis naturality_MK : forall (A : UU0) (m : MK M A),
  naturality_MK m.

Let op1 : t (codensityT M) ~> t M := Hmap t (from naturality_MK).
Let op2 := alifting (psik op) (Lift t _).
Let op3 : E \O t M ~> E \O t (codensityT M) :=
  E ## Hmap t (monadM_nt (Lift codensityT M)).

Definition slifting : E.-operation (t M) := op1 \v op2 \v op3.

Theorem uniform_sigma_lifting : lifting_monadT op slifting.
Proof.
rewrite /lifting_monadT /slifting => X.
apply/esym.
transitivity ((op1 \v op2) X \o op3 X \o E # Lift t M X).
  by rewrite (vassoc op1).
rewrite -compA.
transitivity ((op1 \v op2) X \o
  ((E # Lift t (codensityT M) X) \o (E # Lift codensityT M X))).
  congr (_ \o _); rewrite /op3.
  by rewrite -functor_o -natural_hmap functor_o functor_app_naturalE.
transitivity (op1 X \o
  (op2 X \o E # Lift t (codensityT M) X) \o E # Lift codensityT M X).
  by rewrite vcompE -compA.
rewrite -uniform_algebraic_lifting.
transitivity (Lift t M X \o from naturality_MK X \o (psik op) X \o
  E # Lift codensityT M X).
  congr (_ \o _).
  by rewrite compA natural_hmap.
rewrite -2!compA.
congr (_ \o _).
by rewrite compA -psikE.
Qed.
End uniform_sigma_lifting.
\end{minted}

\end{document}